\providecommand{\U}[1]{\protect\rule{.1in}{.1in}}
\providecommand{\U}[1]{\protect\rule{.1in}{.1in}}
\newcommand{\N}{{\mathbb N}}
\newcommand{\C}{{\mathbb C}}
\newcommand{\R}{{\mathbb R}}
\newcommand{\Z}{{\mathbb Z}}
\newcommand{\Pc}{{\mathcal P}}
\newcommand{\Hc}{{\mathcal H}}
\newcommand{\Lc}{{\mathcal L}}
\def\ov{\overline}
\newcommand{\ba}{\begin{eqnarray}}
\newcommand{\ea}{\end{eqnarray}}
\newcommand{\bas}{\begin{eqnarray*}}
\newcommand{\eas}{\end{eqnarray*}}
\newcommand{\be}{\begin{equation}}
\newcommand{\ee}{\end{equation}}
\newtheorem{theorem}{Theorem}
\newtheorem{proposition}[theorem]{Proposition}
\newtheorem{definition}[theorem]{Definition}
\newtheorem{conjecture}[theorem]{Conjecture}
\newenvironment{proof}[1][Proof]{\noindent\textbf{#1.} }{\ \rule{0.5em}{0.5em}}
\newtheorem{preremark}[theorem]{Remark}
\newenvironment{remark}{\begin{preremark}\rm}{\hfill$\Diamond$\end{preremark}}
\newtheorem{prenotation}[theorem]{Notation}
\numberwithin{equation}{section}
\numberwithin{theorem}{section}
\begin{document}

\title{{Quantization in singular real polarizations: K\"ahler regularization, Maslov correction and pairings }}
\author{Jo\~ao N. Esteves\thanks{Center for Mathematical Analysis, Geometry and Dynamical Systems, Instituto Superior T\'ecnico, Universidade de Lisboa, Av. Rovisco Pais, 1049-001 Lisboa; joao.n.esteves@tecnico.ulisboa.pt}, 
Jos\'e M. Mour\~ao\thanks{Lehrstuhl f\"ur Theoretische Physik III, FAU Erlangen-N\"urnberg and Center for Mathematical Analysis, 
Geometry and Dynamical Systems, Department of Mathematics, Instituto Superior T\'ecnico, Universidade de Lisboa, Av. Rovisco Pais, 1049-001 
Lisboa; jmourao@math.tecnico.ulisboa.pt} and Jo\~ao P. Nunes\thanks{Center for Mathematical Analysis, Geometry and Dynamical Systems, Department of Mathematics, Instituto Superior T\'ecnico, Universidade de Lisboa, Av. Rovisco Pais, 1049-001 Lisboa; jpnunes@math.tecnico.ulisboa.pt}}
\maketitle

\date

\begin{abstract}

We study the Maslov correction to semiclassical states by using a K\"ahler regularized
BKS pairing map from the energy representation to the Schr\"odinger representation. 
For general semiclassical states, the existence of this regularization is based on recently found families of K\"ahler polarizations degenerating 
to singular real polarizations and corresponding to special geodesic rays in the 
space of K\"ahler metrics.
In the case of the one-dimensional harmonic oscillator, we show that the  correct phases associated with caustic points of
the projection of the Lagrangian curves to the configuration space are correctly
reproduced. 

\end{abstract}
\tableofcontents

\section{Introduction}
\label{s1}

Though K\"ahler quantization is mathematically better defined, 
the quantization in (possibly singular) real or
mixed polarizations is frequently physically more interesting. This is partly due to the
fact that
the observables preserving mixed polarizations are likely to
be physically more interesting than those preserving a K\"ahler polarization.
In the present paper we continue along the lines proposed in 
\cite{BFMN, KW2, KMN1, Ki}, which motivate the definition 
of the
quantization for real or mixed polarizations  via degeneration
of  quantizations on suitable families of  K\"ahler polarizations. 

While for translation invariant real or mixed polarizations on a symplectic
vector space it is easy to construct families of K\"ahler polarizations
degenerating to the given one, for singular polarizations 
with lower dimensional (thus singular) fibers
it is usually not easy
to find such K\"ahler families explicitly (see the no-go Theorem \ref{tt34} and the
Conjecture \ref{cc35} below). 
Building up on previous works \cite{BFMN, KW2, KMN1, MN1}, 
in \cite{MN2} a general strategy is proposed
to find families of well behaved polarizations
degenerating to  a wide class of singular real polarizations corresponding to
the level sets of completely integrable systems. 

To obtain a direct link between standard half-form corrected quantization and the Maslov phases appearing 
in the definition of semiclassical states associated
with Lagrangian submanifolds, 
we adopt the following strategy. We consider cases for which the
given Lagrangian submanifold is a leaf of a 
(possibly singular) Lagrangian fibration 
 corresponding to
the level sets of a moment map $\mu$ of a completely integrable system. 
Denoting the corresponding real polarization by $\Pc_\mu$, 
the proposal of \cite{MN2} corresponds then to obtaining 
the Hilbert space ${\mathcal H}_\mu$, of quantum states in the polarization  $\Pc_\mu$,
as the infinite imaginary time limit $\sqrt{-1}s$ with $s \to +\infty$, of the family defined by applying the
imaginary time flow of the Hamiltonian vector field of the norm square of the moment map, $X_{||\mu||^2}$, 
to the Hilbert space corresponding to a starting K\"ahler
quantization. In order to relate the Schr\"odinger representation to this K\"ahler polarization, we consider the Thiemann
complexifier method \cite{Th1,Th2} adapted to geometric quantization in \cite{HK1, HK2, KMN2, KMN3, KMN4}. 
In the toric case, these families of polarizations  were first introduced in \cite{BFMN}. These families were also used 
in \cite{KMN1} to derive the Maslov shift of levels of the Bohr-Sommerfeld leaves.
The momentum space quantization $T^*K$, for compact  Lie groups $K$, was also defined in \cite{KW2} through
the infinite imaginary time limit of the flow of the Hamiltonian 
vector field of the norm squared of the (in general non-abelian)  moment 
map of the  action of $K$ on $T^*K$ (see also \cite{KMN2}).

The Maslov correction has been extensively studied (see, for example, \cite{EHHL, Go, GS, MF, Wo, Wu} and references therein).
In particular, in \cite{Wu} the holonomy of the natural projectively flat connection along geodesic triangles in the 
space of K\"ahler polarizations on $\R^{2n}$ invariant under translations, was shown to yield the triple Maslov index of
Kashiwara when the vertices of the triangle approach mutually transverse polarizations at geodesic infinity. 

In the present paper, we focus on obtaining  semiclassical states associated with real polarizations on $\R^{2n}$,  non--invariant
under translations. We are particularly interested in polarizations for which only K\"ahler regularizations of the 
second type exist (see section \ref{s333} below). We propose a general formalism in section \ref{s4.1} and apply
it to the harmonic oscillator in section \ref{s4.2}.
\section{Preliminaries}
\label{s2}

Let  $(M, \omega, I)$ be a connected  K\"ahler manifold such that $ [\frac{\omega}{2\pi\hbar}],  \frac 12 c_1(M)  \in H^2(M, \Z)$
so that the  canonical line bundle $K_I$ has $I$-holomorphic square roots. Let $\sqrt{K_I}$ denote one such square root
with Chern connection $\nabla^I$ and
let us fix a complex line bundle $L \rightarrow M$ with first Chern class
$c_1(L)= [\frac{\omega}{2\pi\hbar}]$. We consider on $L$ a connection $\nabla$ with curvature
$F_{\nabla}=-\frac{i}{{\hbar}} \omega$ and a compatible Hermitian structure $h^L$.
The half-form corrected quantum Hilbert space corresponding
to $I$  is then 
\be
\label{jsghs}
\mathcal{H}_{{\mathcal{P}}_{I}}= \overline{\left\{s\in\Gamma(L \otimes \sqrt{K_I}):  \left( \nabla_{{\mathcal{P}}_{I}} \otimes 1+ 
1 \otimes  \nabla^I_{{\mathcal{P}}_{I}} \right) 
\ s=0 \right\}},
\ee
where $\Pc_I$ denotes the polarization generated by $I$-anti-holomorphic vector fields and
the bar denotes completion with respect to the inner product  defined by
\begin{equation}\label{innerproduct}
 \langle \sigma,\sigma'\rangle = \frac{1}{(2\pi \hbar)^n}\int_X h^L(\sigma,\sigma') \frac{\omega^n}{n!}.
\end{equation}
In cases when the canonical bundle $K_I$ is trivial and $\Omega_I$ is a global trivializing
section we choose as   $\sqrt{K_I}$  the trivial square root and denote 
by $\sqrt{\Omega_I}$ one of the two trivializing sections of    $\sqrt{K_I}$
which square to $\Omega_I$.

The half-form corrected prequantization of a function $f \in C^\infty(M)$
is given by
\be
\nonumber
\hat f^{pQ} = i \hbar \nabla_{X_f} \otimes 1 + f \otimes 1 + i \hbar 1 \otimes  \Lc_{X_f} ,   
\ee
where $X_f$ denotes the Hamiltonian vector field corresponding to $f$,
or, if a local trivializing section $\sigma$ of $L$ is given, such that
\be
 \label{gauge}
\nabla \sigma = \frac i{\hbar}\Theta \, \sigma , 
\ee
where $d \Theta = - \omega$, 
we obtain, in the local trivialization of $L$ defined by $\sigma$,
\be
\label{pQ}
\hat f^{pQ}  =  i \hbar {X_f} \otimes 1 - L_f \otimes 1 + i \hbar \, 1 \otimes  \Lc_{X_f} , 
\ee
where $L_f = \Theta(X_f) - f$  is called  the Lagrangian of $f$.


\section{K\"ahler regularizations}
\label{s333}

In the present section we study
regularizations associated with the imaginary time flow 
of hamiltonians, $h \in C^{\infty}(M)$, which we call regulators. 
Depending on the mixed polarization $\Pc$ we will consider regulators
of two types.

\begin{definition}
\label{dd1}
$\Pc$-regulators of the first type  or Thiemann (partial) complexifiers
are regulators $h$ for which,  there exists a $T>0$ such that
the polarization
$\Pc^h_{it} = e^{it {\Lc_{X_h}}} (\Pc)$
exists and is K\"ahler for $t  \in (0, T)$.
\end{definition}
 
In interesting families of examples, we can then define also a 
sensible limit of the corresponding Hilbert spaces of polarized quantum states, 
\be
\label{ret1}
\Hc_\Pc := \lim_{t\to 0} \Hc^h_{\Pc_{it}}.
\ee 
In the examples in \cite{KMN1,KMN2,KMN4}, the space of $\Pc$-polarized quantum states was already known and 
the limit actually recovers the correct Hilbert space. Conjecturally, however, one could possibly start with a badly behaved (and hence difficult to quantize directly) polarization $\Pc$ and {\it define} $\Hc_\Pc$ as a limit of well-behaved quantizations in K\"ahler polarizations.

Regulators of the first type were introduced by Thiemann 
in the context of non-perturbative quantum gravity \cite{Th1, Th2} to transform
the $SU(2)$ spin connection to the $SL(2, \C)$ Ashtekar connection.
The prototypical example in finite dimensions  is that of  the vertical polarization
on a cotangent bundle $M=T^*X$ of a compact manifold $X$. Hall and Kirwin \cite{HK1, HK2} showed, both 
for the canonical symplectic form $\omega_c$ and for a symplectic form 
modified by a magnetic field, $\omega_b + B$,
that the
imaginary time flow of the kinetic energy, $h=E_\gamma$, corresponding  to a Riemannian metric $\gamma$
on $X$ defines, at $t=1$ and on a tubular neighborhood of the zero section, a
 K\"ahler structure, which, for $B=0$, coincides with
 the adapted K\"ahler structure introduced
by Guillemin-Stenzel \cite{GS1, GS2} and Lempert-Sz\"oke \cite{LS}. In the cases
when the K\"ahler structure extends to $T^*X$, as is the case of compact Lie groups $X$ with bi-invariant metric, $E_\gamma$ can
be used as a regulator of the first type \cite{KMN2}.

\begin{remark}
Regulators of the first type however do not allow to obtain 
K\"ahler regularizations of many real polarizations as we show 
below in Theorem \ref{tt34}. See also the Conjecture \ref{cc35}.
\end{remark}

\begin{definition}
\label{dd2}
We call $h  \in C^{\infty}(M)$ a $\Pc$--regulator of the second type or (partial) decomplexifier if there 
exist a  polarization $\Pc_0$  such that
the polarization $\Pc^h_{it} = e^{it {\Lc_{X_h}}} (\Pc_0)$
exists and is K\"ahler for $t  > 0$ and 
$$
\lim_{t \to +\infty}   \Pc^h_{it} = \Pc,
$$
in an open, dense subset of $M$. 
\end{definition}
Then, as above, one can look for a sensible definition of the space of $\Pc$-polarized quantum states 
by considering the limit 
\be
\label{ret2}
\Hc_\Pc := \lim_{t\to \infty} \Hc_{\Pc^h_{it}},
\ee 
in an appropriate sense.

The need for regulators of the second type comes 
from the difficulty in finding regulators of the first
type for example for real polarizations with compact fibers. 
In fact, we can prove easily the following result concerning the important case of completely integrable 
systems on compact manifolds. 

\begin{theorem}
\label{tt34}
Let $(M, \omega)$ be a compact real analytic completely integrable system defined by $n$ Hamiltonian functions $H_1,\dots,H_n$ in 
Poisson involution, with $dH_1 \wedge \cdots \wedge dH_n \neq 0$ on an open dense subset of $M$.
Let $\Pc$ be the real (necessarily singular) polarization with integral leaves corresponding to the level sets of $\mu = (H_1, \dots, H_n)$. 
Then there can be no real analytic $\Pc$--regulator of the first type.
\end{theorem}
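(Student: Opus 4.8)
The plan is to argue by contradiction, exploiting the tension between compactness of $M$ and the behavior of the imaginary-time flow near the singular fibers of $\Pc$. First I would recall that a regulator $h$ of the first type must, by Definition \ref{dd1}, generate a flow $e^{it\Lc_{X_h}}$ whose action on $\Pc$ produces a K\"ahler polarization for small $t>0$; in particular, restricting attention to a leaf $L_0$ through a regular point $m_0$ (where $d H_1 \wedge \cdots \wedge dH_n \neq 0$), the tangent space $\Pc_{m_0}$ is spanned by the real Hamiltonian vector fields $X_{H_1}, \dots, X_{H_n}$, and the deformed polarization at time $it$ is spanned by $X_{H_j} + it\, [X_h, X_{H_j}] + O(t^2)$. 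For this to move off the real locus into a genuine complex (K\"ahler) polarization, the leading correction $[X_h, X_{H_j}] = -X_{\{h, H_j\}}$ must be transverse to $\Pc_{m_0}$, i.e.\ the functions $\{h, H_j\}$ cannot all be constant along the leaf; equivalently, $h$ cannot Poisson-commute with all the $H_j$ on a full-measure set.

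The key step is then to derive a contradiction from real analyticity. Since $h$ is assumed real analytic and $M$ is compact real analytic, the functions $\{h, H_i\}$ are real analytic on $M$. The K\"ahler (hence in particular non-real, non-degenerate) condition on $\Pc^h_{it}$ for $t \in (0,T)$ forces, by analyticity in $t$, that the first nonvanishing term of the Taylor expansion governs the signature/non-degeneracy of the induced metric; a short computation shows the positivity of the K\"ahler metric on $\Pc^h_{it}$ near a regular leaf is controlled by the Hessian-type quantity built from $\{h, H_j\}$ restricted to the leaf directions, and K\"ahlerness on a neighborhood of the regular leaf is equivalent to this quadratic form being positive (or negative) definite there. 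I would next observe that this positivity must in fact fail somewhere because $M$ is compact: the regular leaves of $\mu$ are compact Lagrangian tori (Arnold--Liouville), and the restriction of the real analytic function $\{h, H_j\}$ — or more precisely the relevant action-angle fiber derivative — to such a torus, being a function on a compact manifold, has a critical point where its differential vanishes, forcing the controlling quadratic form to degenerate on that torus. Hence $\Pc^h_{it}$ fails to be K\"ahler along that leaf for all small $t$, contradicting Definition \ref{dd1}.

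To make the last point precise I would pass to Arnold--Liouville action-angle coordinates $(a, \theta)$ on the union $U$ of regular fibers, in which $\omega = \sum da_i \wedge d\theta_i$, $H_j = H_j(a)$, and $\Pc$ is spanned by $\partial/\partial\theta_1, \dots, \partial/\partial\theta_n$. A regulator $h$ of the first type deforms this to a K\"ahler polarization, and the standard computation (as in \cite{BFMN, KMN1}) shows the K\"ahler potential / metric on the deformed polarization is governed by the $\theta$-Hessian of (the angle-dependent part of) $h$; K\"ahlerness on a neighborhood of a whole torus fiber requires this $n\times n$ Hessian matrix $\big(\partial^2 h / \partial\theta_i \partial\theta_j\big)$ to be positive definite at every point of that fiber. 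But for each fixed $a$, the function $\theta \mapsto h(a,\theta)$ is a smooth (real analytic) function on the compact torus $\mathbb{T}^n$, so it attains a maximum, at which its Hessian is negative semidefinite — contradicting positive definiteness. (One must also rule out the degenerate alternative that $h$ is independent of $\theta$ on an open set, which by real analyticity would force $h$ to Poisson-commute with all $H_j$ on all of $M$, so that the flow does nothing to $\Pc$ and never produces a K\"ahler polarization.)

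The main obstacle I anticipate is the step connecting "K\"ahler on a tubular neighborhood of a regular leaf" to a clean nondegeneracy statement purely in terms of the $\theta$-Hessian of the regulator: the flow $e^{it\Lc_{X_h}}$ acts on the polarization, not just via a potential, so one has to check that no cancellation among higher-order terms in $t$, or contributions from the action-direction derivatives of $h$, can restore positivity where the leading Hessian degenerates. Real analyticity in $t$ (so that the sign is determined by the first nonzero order) together with the fact that the degeneracy occurs on a full codimension-zero subset of a torus fiber — not at isolated points — should be exactly what closes this gap, but laying it out carefully is where the real work lies; everything else is either the Arnold--Liouville normal form or elementary compactness.
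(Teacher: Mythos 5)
Your strategy is genuinely different from the paper's. The paper's proof is global and very short: by the real analyticity results of \cite{MN1}, for small complex $\tau$ the deformed polarization $\Pc^h_\tau$ is generated by the Hamiltonian vector fields of the \emph{global} functions $H_j^\tau=e^{\tau X_h}H_j$; for small imaginary time these are nonconstant global holomorphic functions for the complex structure underlying the K\"ahler polarization, which is impossible on a compact complex manifold. You instead localize near a regular Liouville torus and extract a convexity obstruction from compactness of the \emph{fiber} rather than of $M$; this is a reasonable idea (it reflects why second-type regulators are convex in the actions, not the angles), and if completed it would even yield a statement the paper's argument does not, namely nonexistence of first-type regulators whenever the regular fibers are compact tori, regardless of compactness of $M$.

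However, the central step is a genuine gap, and you flag it yourself without closing it: the claim that K\"ahlerness of $\Pc^h_{it}$ near a regular fiber for small $t>0$ forces the angle Hessian $\bigl(\partial^2 h/\partial\theta_i\partial\theta_j\bigr)$ to be definite at \emph{every} point of that fiber is not established, and as stated it is too strong. For fixed small $t$ the K\"ahler condition is definiteness of the matrix $G_{jk}(t)=i\{H_j^{it},\overline{H_k^{it}}\}$, whose expansion is $G(t)=t\,B^{T}(\mathrm{Hess}_\theta h)\,B+O(t^2)$ with $B=(\partial H_j/\partial a_l)$ invertible on regular fibers; so only the order-$t$ term is controlled by the Hessian, and a degeneracy of that term at a point (e.g.\ a degenerate maximum of $h$ along the torus) could a priori be cured at order $t^2$. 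Your maximum-point argument and your ``$h$ independent of $\theta$ on an open set'' alternative therefore do not cover all cases (in particular $h$ constant only on some fibers, or semidefinite-but-degenerate Hessians). The patch is the actual content of your route: if $\mathrm{Hess}_\theta h$ is semidefinite of one sign at every point of a given torus, then integrating its diagonal entries over the angles forces it to vanish identically, so $h$ is constant on that torus; hence either some regular torus carries both a strictly positive and a strictly negative direction of the order-$t$ term, which is incompatible with a fixed-sign definite $G(t)$ for arbitrarily small $t$, or $\{h,H_j\}\equiv 0$ on the dense regular set, hence on $M$, in which case $\Lc_{X_h}$ annihilates the generators $X_{H_j}$ and $\Pc^h_{it}=\Pc$ is never K\"ahler. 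Finally, note that your expansion of the deformed polarization in powers of $t$ silently uses exactly what the paper invokes from \cite{MN1}: real analyticity of $h$ is what guarantees that $\Pc^h_{it}$ exists for small $t$ and is generated by the flowed Hamiltonian vector fields with uniformly convergent expansions near the fiber; without citing that, the hypothesis $h\in C^{\omega}(M)$ plays no role in your argument.
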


\begin{proof}
Recall that $\Pc$ is pointwise generated by the global Hamiltonian vector fields $X_{H_j}, j=1,\dots, n$. Suppose that there
exists 
 a $\Pc$--regulator  of the first type,  $h \in C^\omega(M)$. From \cite{MN1}, it then follows that there exists $\epsilon >0$ such that
$ \Pc^h_\tau = e^{\tau \Lc_{X_h}} \Pc$ is, for all $\tau \in \C \ : |\tau| < \epsilon$,  a polarization generated by the (complex) 
Hamiltonian vector fields of the 
global functions $H_j^\tau = e^{\tau {X_h}} H_j$. Then there exists a 
$\epsilon' \leq \epsilon$ such that $ \Pc^h_{it}$ is K\"ahler for all $t \, : \, 0<|t| < \epsilon'$ and $H_j^{it}$ are nonconstant global
holomorphic functions which contradicts the compacteness of $M$. 
\end{proof}

\begin{conjecture}
\label{cc35}
We conjecture that do not exist regulators of the first type  for singular polarizations
$\Pc$ 
 such that
 there exist points $x\in M$ for which $\Pc_x$ is an isotropic non-Lagrangian subspace of $T_xM \otimes \C$.
\end{conjecture}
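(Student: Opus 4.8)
The plan is to argue by contradiction, using the principle that the complex--time flow $e^{\tau\Lc_{X_h}}$ of a polarization acts, as in \cite{MN1}, by the analytic continuation of pushforward along the Hamiltonian flow of $h$, and therefore cannot change the pointwise rank of the distribution it is applied to. So suppose $h$ were a $\Pc$--regulator of the first type while $\Pc_x$ is isotropic but not Lagrangian at some $x\in M$, i.e. $k:=\dim_\C\Pc_x<n$. Then $\Pc^h_{it}=e^{it\Lc_{X_h}}(\Pc)$ should still have a point of rank $k$, contradicting the fact that a K\"ahler polarization is Lagrangian and hence of constant rank $n$. Equivalently, one can try to exploit the natural continuity $\Pc^h_{it}\to\Pc$ as $t\to0^+$: a limit of $n$--planes in the (compact) Grassmannian is again an $n$--plane, which is incompatible with the limiting fibre $\Pc_x$ having dimension $k<n$.

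To turn this into a proof I would follow the scheme of the proof of Theorem \ref{tt34}. After fixing the regularity of $h$ that makes $e^{it\Lc_{X_h}}\Pc$ meaningful, I would use the result of \cite{MN1} that, for $|\tau|<\epsilon$, $\Pc^h_\tau$ exists and is generated by the analytic continuations of local generators of $\Pc$, and reduce to a local model near $x$. In the model case where $\Pc$ is generated near $x$ by Hamiltonian vector fields $X_{H_1},\dots,X_{H_n}$ of functions in Poisson involution --- with $(dH_1\wedge\cdots\wedge dH_n)_x=0$, as in Theorem \ref{tt34} --- the flowed polarization $\Pc^h_\tau$ is generated by $X_{H_1^\tau},\dots,X_{H_n^\tau}$ with $H_j^\tau=e^{\tau X_h}H_j$, and, since $e^{\tau\Lc_{X_h}}$ is a Poisson automorphism, the $H_j^\tau$ stay in involution. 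Hence $\Pc^h_{it}$ is again a (would--be K\"ahler) singular Lagrangian fibration whose singular locus $S_{it}=\{p:(dH_1^{it}\wedge\cdots\wedge dH_n^{it})_p=0\}$ must be empty for it to be K\"ahler. For real $\tau$ one has $S_\tau=\phi^h_\tau(S_0)\ni\phi^h_\tau(x)$, and I would try to propagate non--emptiness to $\tau=it$ using that the joint zero locus $\{(\tau,p):(dH_1^\tau\wedge\cdots\wedge dH_n^\tau)_p=0\}$ is cut out by equations holomorphic in $\tau$ and real--analytic in $p$, together with the analytic continuation of the curve $\tau\mapsto\phi^h_\tau(x)$.

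The hard part will be exactly this last step: the analytic continuation of a singular point $\phi^h_\tau(x)$ a priori leaves $M$ and lives in a complexification $M_\C$, so one must show it returns to $M$ at imaginary times, or else supply an alternative obstruction --- for instance a failure of positivity, or non--transversality of $\Pc^h_{it}$ and $\overline{\Pc^h_{it}}$ as $t\to0^+$, where the relevant $2n$--form degenerates at $t=0$. One must also treat genuinely singular polarizations not of the integrable form above, where even choosing good local generators near the singular set is delicate, and one must pin down the precise regularity of $h$. The dimension heuristic in the first paragraph is compelling, but making it rigorous across all these cases is what keeps the statement at the level of a conjecture.
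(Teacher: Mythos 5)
The statement you are trying to prove is stated in the paper as a \emph{conjecture}; the paper offers no proof of it, so there is nothing to match your argument against, and your own text in fact concedes that it is not a proof but a strategy sketch. Judged as a proof attempt, it has a genuine gap, and it is exactly the one you flag at the end: the ``rank cannot change'' principle is not established and is not obviously true. For real $\tau$ one has $\Pc^h_\tau|_x=(d\phi^h_\tau)\bigl(\Pc|_{\phi^h_{-\tau}(x)}\bigr)$, so the rank of the flowed distribution at a \emph{fixed} point $x$ already varies with $\tau$ unless the singular set is flow-invariant; for imaginary $\tau$ the relevant determinant is the analytic continuation in $\tau$ of data evaluated at $\phi^h_{-\tau}(x)$, which leaves $M$ and lives in a complexification, so the vanishing that encodes the rank drop need not persist at $\tau=it$. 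Likewise, the Grassmannian continuity argument fails precisely because a singular polarization is not a continuous family of $n$-planes, so $\Pc^h_{it}\to\Pc$ can only hold in a weak sense (on the open dense regular set), and no contradiction is extracted at the singular point $x$.

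Note also that your plan to ``follow the scheme of the proof of Theorem \ref{tt34}'' does not transfer: that proof does not use any rank or dimension count at the singular fibres. Its contradiction comes from compactness of $M$ --- the flowed integrals $H_j^{it}$ would be nonconstant \emph{global holomorphic} functions on a compact complex manifold. The conjecture makes no compactness assumption (its intended examples include $\R^{2n}$, cf.\ Section \ref{s4.1}), so that mechanism is unavailable, and one must find a different obstruction (positivity, non-transversality of $\Pc^h_{it}$ and $\overline{\Pc^h_{it}}$ near $x$, or a local analytic-continuation argument at the singular locus), none of which is carried out here. In short: the heuristic is reasonable motivation for why the conjecture is plausible, but the two key steps --- persistence of the singularity under the imaginary-time flow, and handling singular polarizations without an integrable local model --- are missing, which is consistent with the authors leaving the statement as a conjecture.
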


Fortunately, as shown in \cite{KMN4, MN2}, there are regulators of the second type for many  of the above 
examples. They are given by strongly convex functions of the Hamiltonians $H_j$.

\section{Schr\"odinger semiclassical states and Maslov phases}
\label{s3}

\subsection{K\"ahler regularized semiclassical states}
\label{s4.1}

Let $\Lc \subset T^* \R^n$ be a compact closed Lagrangian submanifold and consider the 
Schr\"odinger representation, that is the prequantum line bundle $ L = T^* \R^n \times \C$ with global
trivializing section the constant function, with connection $\nabla \, : \, \nabla 1 = \frac i\hbar  p dq =  \frac i\hbar \sum_{j=1}^n p_jdq_j $ and
\ba
\label{ee-srp}
\nonumber
\Pc_{Sch}&=&\langle\frac{\partial}{\partial p}\rangle_\C= \langle\frac{\partial}{\partial p_1}, \dots ,
\frac{\partial}{\partial p_n}\rangle_\C   \\
\Hc^Q_{Sch} &=& L^2(\R^n, d^n  q) \otimes \sqrt{d^n q } . 
\ea
 Our goal
in the present section will be to use K\"ahler regularization to construct a semiclassical
state $\psi_\Lc$  in the Schr\"odinger representation that is an approximate eigenvector 
of a quantum Hamiltonian $\hat h$ corresponding to the quantization of a function $h \in C^\infty(T^* \R^n)$
such that  
\be
\label{ee-lag}
\Lc \subset \left\{(q,p) \in \R^{2n}  \  : \ h(q, p) = E\right\}.
\ee   
$\psi_\Lc$ will therefore be an approximate solution of the eigenvalue equation
\be
\label{ese}
\hat h \psi = E \psi .
\ee 

Such states have been obtained  mainly by using the WKB method of constructing approximate
solutions of (\ref{ese}) and then imposing the Maslov correction to improve the solution (see, for example, 
\cite{Wo,GS,BW}). The Maslov correction changes the energy levels  (and therefore
the set of quantizable Lagrangians)  by correcting the
Bohr-Sommerfeld quantization conditions  and introduces phases in the caustic  points of the
projection $\pi(\Lc)$, where $\pi$ is the canonical projection $\pi : T^*\R^n  \rightarrow \R^n, \
\pi(q, p) = q$.  

To construct $\psi_\Lc$ with the help of K\"ahler regularization, we will 
consider K\"ahler regularizations of both the energy and Schr\"odinger 
representations, such that they are both deformed, through one-parameter families, 
to K\"ahler polarizations. We then use the limit of the BKS pairing map between the 
K\"ahler polarizations along these families, to define the pairing map $B$  from the energy 
representation to the Schr\"odinger representation.
We construct $\hat \psi_\Lc$ in the energy representation and
define $\psi_\Lc = B(\hat \psi_\Lc)$ (see (\ref{ee-bss}), (\ref{ee-rscs})).

For the vertical polarization, $\Pc_{Sch} = \langle \frac{\partial }{\partial p_j}, j=1, \dots, n \rangle_\C$, 
there are many Thiemann complexifiers or regulators of the first type. 
It follows from \cite{KMN2} that any strongly convex function of the
momenta is a $\Pc_{Sch}$ regulator of the first type. Functions of both
$p$ and $q$ can also be used as, for example,  the Hamiltonians of
harmonic oscillators which will be studied in \cite{Es}. Let $h_1$ denote
such a regulator and assume that, eventhough $h_1$  does not
preserve the vertical polarization (otherwise it could not be 
a $\Pc_{Sch}$--regulator of the first type), it has a natural
quantization on the Schr\"odinger representation,
$\hat h_1^{Sch}$. Then 
let 
\be
\label{ee-pit}
\Pc_{it}^{h_1} = e^{it \Lc_{X_{h_1}}} \, \Pc_{Sch}
\ee
and define the following  Kostant--Souriau-Heisenberg (KSH)
regularization map for Schr\"odinger states
\be
\label{ee-uit1}
U_1^{it} =   e^{\frac{t}{{\hbar}} \hat h_1^{pQ}}  \circ e^{-\frac{t}{{\hbar}} \hat h_1^{Sch}}   :
\, \Hc_{\Pc_{Sch}} \longrightarrow \Hc_{\Pc_{it}^{h_1}},
\ee
where $\hat h_1^{pQ}$ is given by (\ref{pQ}).
In the context of studying the equivalence of quantizations
for different polarizations,  
this map was introduced in \cite{KMN2, KMN3} for the case of cotangent bundles of a compact Lie group, 
while for toric symplectic manifolds it was considered in \cite{KMN4}, and, more generally, it is studied 
in \cite{MN2}. (For $K$ a compact Lie group, 
$M=T^* K$ and $h_1$ the Hamiltonian corresponding
to geodesic motion on $K$ for the bi-invariant metric, the
KSH map is equivalent to the Hall coherent state transform  \cite{Ha1, Ha2}, see \cite{KMN2}.)

Next, we will need to assume that $h$ is an Hamiltonian in a classically completely integrable system with 
integrals of motion defining a moment map, 
$\mu = (H_1, \dots , H_n):\R^{2n} \cong T^*\R^n \rightarrow \R^n$,
such that $\Lc$ is a corrected Bohr-Sommerfeld fiber. I.e., for some $c_0\in \R^n$,
\be
\label{ee-lag0}
\Lc = \Lc_{c_0} = \left\{(q, p) \in \R^{2n} \ : \ \mu(q,p) = c_0\right\}  .  
\ee
and
\be
\label{ee-ham}
h = F \circ \mu , 
\ee
for some $F \in C^\infty(\R^n)$ with $F(c_0)=E$.
Let $\Pc_\mu$ denote the real polarization having the level sets $\Lc_c, c \in \R^n$ as leaves. 

\begin{definition}
\label{dd-er}
We will call the  polarization $\Pc_\mu$ the energy polarization
and the corresponding quantization on $\Hc_{\Pc_\mu}$ the energy
quantization.
\end{definition}

A real polarization on $\R^{2n}$ having a compact fiber will typically have singular (lower dimensional) 
fibers and therefore, due to Conjecture \ref{cc35}, we do not
expect a K\"ahler regulator of the first type to exist for $\Pc_\mu$ so that 
we will need to consider a regulator of the second type for $\Pc_\mu$. This problem was studied
in the toric case in \cite{BFMN, KMN1}  and in general in \cite{MN2} and 
we will review now some of the results.

Let us assume that the level sets $\Lc_c$ are compact for noncritical values
$c\in \R^n$ and that a function $G  : \R^n \rightarrow \R$ exists
such that  $h_2 = G \circ \mu$ is strongly convex as a function of all 
action variables on equivariant neighborhoods of all regular fibers. 
This, plus some technical assumptions on the Fourier 
coefficients of local holomorphic functions for some initial
complex structure $J_0$,  imply 
that the polarization
\be
\label{ee-rer}
\Pc_{it}^{h_2} = e^{it \Lc_{X_{h_2}}} \ \Pc_0 , 
\ee
where $\Pc_0$ is the polarization associated with $J_0$, 
converges to $\Pc_ \mu$ as $t \to \infty$, 
$$
\lim_{t\to +\infty}  \Pc_{it}^{h_2}  = \Pc_\mu , 
$$
in an appropriate (weak) sense \cite{BFMN, MN2}.
We will also assume that there are 
 local $J_0$-holomorphic coordinates, $\{u_j\}_{j=1,\dots, n}$,
such that
pointwize, on a neighborhood of every point, one has
\be
\label{ee-coor}
\lim_{t \to \infty} \beta(t) \, du_1^{(it)} \wedge \dots \wedge du_n^{(it)} = dH_1 \wedge \dots \wedge dH_n \ ,
\ee
for some sooth, positive function $\beta \in C^\infty((0, \infty))$, where
$u_j^{(it)} =  e^{it {X_{h_2}}} (u_j)$.
Consider the  following modification of the KSH map, introduced in \cite{MN2}, 
\be
\label{ee-uit2}
U_2^{it} = e^{-\frac{t}{{\hbar}} \hat h_2^\mu} \circ  e^{\frac{t}{{\hbar}} \hat h_2^{pQ}},
\ee
where  $\hat h_2^{pQ}$ is defined in (\ref{pQ}) and $ \hat h_2^\mu$
is the following self adjoint operator 
\be
\label{ee-defh2}
\hat h_2^\mu \left( f \otimes \sqrt{du_1^{(it)} \wedge \dots \wedge du_n^{(it)}} \right)= G(\tilde H_1^{pQ}, \dots , \tilde H_n^{pQ}) \left( f \right)  \otimes   \sqrt{du_1^{(it)} \wedge \dots \wedge du_n^{(it)}}  \,  ,
\ee
densely defined on
$L^2(\R^{2n}) \otimes \sqrt{du_1^{(it)} \wedge \dots \wedge du_n^{(it)}}$,
where $\tilde H^{pQ}_j$  denotes the prequantization of $H_j$ 
without the half-form correction, $\tilde H^{pQ}_j = i \hbar X_{H_j} - L_{H_j}$ (compare with (\ref{pQ})).
The operator $U_2^{it}$ in (\ref{ee-uit2})
maps, for all $t > 0$, $\Hc_{\Pc_{0}}$ to, in general non-polarized,
subspaces of
$$L^2(\R^{2n}) \otimes \sqrt{du_1^{(it)} \wedge \dots \wedge du_n^{(it)}}.$$
 We will further assume that  the following limit, for every  $\psi \in \Hc_{\Pc_0}$,
\be
\label{ee-u2i}
U_2^{i \infty} (\psi) = \lim_{t \to \infty} U_2^{it} (\psi)
\ee
exists  and
 the resulting map, $U_2^{i \infty} \, : \, \Hc_{\Pc_0} \rightarrow \Hc_{\Pc_\mu}$, is an isomorphism onto the 
space of polarized Dirac delta distributions supported 
 on Maslov corrected Bohr-Sommerfeld  Lagrangian leaves (see \cite{BFMN, KMN1, MN2}).
Then the distributional section
\be
\label{ee-bss}
\hat \psi_{\Lc_{c_0}}(H,\theta) = \delta (H-c_0) \, e^{\frac{i}{{\hbar}} \tilde c_0 \cdot  \theta} \otimes \sqrt{d^n H}
 \in \Hc_{\Pc_\mu} , 
\ee
where $(H,\theta)$ are local action-angle coordinates, the phase factor $\tilde c_0 \in \hbar \Z^n$ 
corresponds to the uncorrected Bohr-Sommerfeld leaf, $H = \tilde c$, 
(compare with (\ref{ee-extr}) for the harmonic oscillator), is the image
of a uniquely defined section $ \tilde \psi_\Lc  \in \Hc_{\Pc_0}$,
\be
\label{ee-phl}
\hat \psi_{\Lc}=  U_2^{i \infty} \ (\tilde \psi_\Lc) \, .
\ee

Summarizing, our proposal to use K\"ahler regularization to construct
semiclassical solutions of (\ref{ese}) can be divided in the 
following steps.

\begin{itemize}

\item[1)]  \emph{Choice of  regulators $h_1, h_2$ and construction of the KSH maps in  $U_1^{it}$ in (\ref{ee-uit1}) and $U_2^{it}$ in (\ref{ee-uit2}):}

\begin{itemize}

\item[i)] Choose the Thiemann complexifier or  regulator of first type, $h_1$, for the Schr\"odinger polarization
and define the one-parameter family of K\"ahler polarizations (\ref{ee-pit}) 
and K\"ahler regularizations of the Schr\"odinger representation (\ref{ee-uit1}). A standard choice is 
the free particle Hamiltonian, $h_1(q, p)=\frac12 ||p||^2$, for which
$\Pc_{it}^{h_1} = \langle X_{z_1^{(it)}}, \dots , X_{z_n^{(it)}}\rangle_\C$, where
$z^{(it)} = q + it p.$ {Another interesting possibility, studied in \cite{Es},
consists in using the regulator of second type used for the
polarization $\Pc_\mu$ also as Thiemann complexifier,
i.e. $h_1=h_2=h$.} Find $U_1^{it}$ in (\ref{ee-uit1}).

\item[ii)]Choose a $\Pc_\mu$--regulator of second type, $h_2$, for the energy 
representation of
 Definition \ref{dd-er}, define 
 the one-parameter family of K\"ahler polarizations (\ref{ee-rer})
and the K\"ahler regularized Hilbert space
as the image,  $U_2^{it} (\Hc_{\Pc_0})$, of (\ref{ee-uit2}), for large $t$,
leading  to (\ref{ee-u2i}),   (\ref{ee-bss}) and  (\ref{ee-phl}).
As mentioned above, from \cite{BFMN, KMN1, KMN4, MN2}, it follows
that $h_2$ should be a function on $\R^{2n}$ that is strongly convex in the
action coordinates. 
\end{itemize}

\item[2)]  \emph{BKS pairing between 
the Schr\"odinger representation and the energy representation
of Definition \ref{dd-er}:} 

For the states
$\psi_1 \in \Hc_{Sch}$ and 
$\psi_2 \in \Hc_{\mu}$ define their time $(t_1,t_2)$ regularized
BKS pairing as
\be
\label{ee-rbks}
\langle  \psi_1,  \psi_2     \rangle^{t_1,t_2} =
\langle U_1^{it_1}  (\psi_1), U_2^{it_2}  (\tilde \psi_2)  \rangle_{BKS}   , 
\ee
where 
$\tilde \psi_2$ is the state in $\Hc_{\Pc_0}$ mapped
to $\psi_2$ by $U^{i \infty}_2$, that is
$\tilde \psi_2 = (U^{i \infty}_2)^{-1}(\psi_2)$ and
the pairing in the right hand side is the usual half-form corrected
 BKS pairing of geometric
quantization. The BKS pairing between states in $\Hc_{Sch}$ and $\Hc_{\Pc_\mu}$  
is then defined by the following limit of K\"ahler regularized pairings
\be
\label{ee-rbks2}
\langle  \psi_1,  \psi_2     \rangle_{BKS} = \lim_{(t_1, t_2)  \to (0,  \infty) } \,\langle  \psi_1,  \psi_2     \rangle^{t_1,t_2}  \, , 
\ee
in case the limit exists $\forall \psi_1 \in  \Hc_{Sch}, \psi_2 \in
 \Hc_{\Pc_\mu}$. 

\item[3)]  \emph{Definition of the semiclassical state $\psi_\Lc$:}

If the limit in (\ref{ee-rbks2}) exists and is continuous on
$\Hc_{Sch} \times  \Hc_{\Pc_\mu}$,
 it defines a pairing map 
\be
\label{ee-pm}
B:\Hc_{\Pc_\mu}  \rightarrow  \Hc_{Sch} . 
\ee
The semiclassical state corresponding to $\Lc$ via the  K\"ahler regularization of the
pairing is then defined to be
\be
\label{ee-rscs}
\psi_\Lc = B(\hat \psi_\Lc) , 
\ee
where $\hat \psi_\Lc$ is the state defined in (\ref{ee-bss}). So, $\psi_\Lc$ is the unique state in $\Hc_{Sch}$ such that 
$$
\langle \psi,\psi_\Lc\rangle_{\Hc_{Sch}} = \langle \psi, \hat\psi_\Lc\rangle_{BKS}, \forall \psi \in \Hc_{Sch}.
$$
\end{itemize}

\begin{remark}
The pairing map $B$ and thus the semiclassical states, could in principle 
depend on the K\"ahler regularizations, though we would expect to obtain the 
same result for generic choices. 
\end{remark}

\subsection{Maslov phases for the harmonic oscillator}
\label{s4.2}

Let us illustrate the general method of the previous section in the 
case of the one-dimensional harmonic oscillator. We have $n=1,  M= \R^2, \omega = dq \wedge dp, 
\Theta = pdq$,  $L = \R^2 \times \C$, $\nabla \, 1 = \frac i\hbar \Theta$ and $h(q,p)= \mu = H =  \frac 12 (p^2+ q^2)$. As in (\ref{ee-srp}), we have
\be
\nonumber
\Hc_{Sch}= L^2(\R, dq) \otimes \sqrt{dq} .
\ee
For $\Hc_{\Pc_\mu}$, since $h$ is the action variable for the standard toric structure in $\R^2$, 
we know from \cite{KMN1} that
\be
\label{ee-ext2}
\Hc_{\Pc_\mu} = \langle \delta(h - \hbar (m+ \frac 12) ) \, e^{-\frac i{2\hbar} \,  pq} \, e^{i m \theta}, \, m \in \N_0 \rangle_\C \otimes \, \sqrt{dh} \, .
\ee

Let us now follow the three steps described in the previous section, in this example.

\begin{itemize}
\item[1)] \emph{Choice of  regulators $h_1, h_2$ and construction of $U_1^{it}$ in (\ref{ee-uit1}) and $U_2^{it}$ in (\ref{ee-uit2}):}

\begin{itemize}
\item[i)] 
Let us choose the standard Thiemann complexifier for the
Schr\"odinger representation, $h_1(q, p) = \frac 12 p^2$. Then
\bas
\nonumber
\label{ee-pith}
\Pc_{it}^{h_1} &=&  e^{it \Lc_{X_{h_1}}} \Pc_{Sch} =  e^{it \Lc_{X_{h_1}}} \langle X_q\rangle_\C = 
\langle X_{e^{it {X_{h_1}}}(q)}\rangle_\C = \\
&=&  \langle X_{q +itp}\rangle_\C  =   \langle X_{z^{(it)}}\rangle_\C 
\eas
 where
$z^{(it)} = q + it p,$ defines a K\"ahler polarization for all $t>0$, confirming that $h_1$ is
a $\Pc_{Sch}$--regulator of the first type. This is the simplest example of imaginary time
geodesic flows starting at the Schr\"odinger polarization and  leading (at time $= \sqrt{-1}$) to adapted K\"ahler structures on tubular  neighbourhoods (the whole $T^* \R$ in the present case) of the zero section of cotangent bundles of
compact Riemannian manifolds \cite{HK1}.
The equation for $\Pc_{it}^{h_1}$--polarized sections of $ L$ reads,
$$
\nabla_{X_{z^{(it)}}} \,  \psi  = 0 \Leftrightarrow    \left(- \frac {\partial}{\partial p} + it \frac {\partial}{\partial q} - p \frac t\hbar \right) \psi (q, p) = \left(2it \frac {\partial}{\partial \bar z^{it}} -p\frac{t}{\hbar}\right)\psi(q,p)=0, 
$$
and therefore
\be
\label{ee-hit1}
\Hc_{\Pc_{it}^{h_1}} = \left\{ f(z^{(it)})  \, e^{- \frac t{2\hbar} p^2} \,   \otimes \sqrt{dz^{(it)}} \, :  \, 
\int_{\R^2} \, | f(z^{(it)})|^2 \, e^{- \frac t\hbar p^2} \, dqdp   < \infty  \right\} \, ,
\ee
where $f$ is $\Pc_{it}^{h_1}$-holomorphic.

Let us now obtain the corresponding K\"ahler regularization maps, $U_1^{it}$.
{}From (\ref{pQ}) we obtain
\be
\label{ee-h1p} 
 \hat p^{pQ} = i \hbar \left(\frac{\partial}{\partial q} \otimes 1 
+  1 \otimes \Lc_{ \frac{\partial}{ \partial q} }\right) \, \hbox{and} \quad 
 \hat h_1^{pQ} = i \hbar  \, \left(p \frac{\partial}{\partial q} \otimes 1
+  1 \otimes \Lc_{p \, \frac{\partial}{ \partial q} }\right) -
\frac {p^2}2 \otimes 1 .
\ee
We see that, due to the fact that $p$ preserves the
Schr\"odinger polarization, $\hat p^{pQ}$ acts on the
Schr\"odinger representation. We can then define
\be
\label{ee-h1s}
\hat h_1^{Sch} = \frac{(\hat p^{pQ})^2}2|_{ \Hc^Q_{Sch}} = - \frac{\hbar^2}2 \, \frac{\partial^2}{\partial q^2} \otimes 1.
\ee
It is convenient to act with $U_1^{it}$ on $\psi \in \Hc_{Sch}$, written in the 
form
$$
\psi(q) \otimes \sqrt{dq} = \frac 1{\sqrt{2\pi}} \int_\R \, e^{ \frac i\hbar \, p_0q} \, \tilde \psi(p_0) \, dp_0 \, \otimes \sqrt{dq} .
$$
{}From (\ref{ee-uit1}),  (\ref{ee-h1p}) and  (\ref{ee-h1s}) we obtain for
$U_1^{it} (\psi) \in \Hc_{\Pc_{it}^{h_1}}$,
\be\label{t1state}
U_1^{it} (\psi)(q,p) =  \frac 1{\sqrt{2\pi}} \int_\R  \, e^{ \frac i\hbar \, p_0q} \, e^{- \frac{t}{2{\hbar}} (p{+}p_0)^2} \,  \tilde \psi(p_0) \, dp_0 \, \otimes \sqrt{dz^{(it)}}   \, .
\ee

\item[ii)]  For the choice of $h_2$ notice that $h=H=\mu$  is an action coordinate in $\R^2 \setminus \{0\}$ 
generating a global $S^1$ action.
The
angle coordinate is the polar angle $\theta = \arctan(p/q)$. 
We can therefore choose as $h_2$ a strongly convex function of
$H=h$, e.g. $$h_2(q, p) = \frac 12 H^2 = \frac 18 (p^2+q^2)^2.$$ 
Let $w=z^{(i)}=q+ip = \sqrt{2h}\, e^{i \theta}$ and
choose as starting polarization, the $S^1$--invariant 
toric polarization $\Pc_0 = \langle X_w \rangle_\C= \langle\frac{\partial}{\partial \ov w}\rangle_\C$. 
We have $X_{h_2} = - h \frac{\partial}{\partial \theta}$ and therefore the one-parameter
family of polarizations obtained by flowing with this vector field in imaginary time 
is also toric and a simple particular example of those studied in \cite{BFMN}
and \cite{KMN1}
\bas
\nonumber
\label{ee-pith2}
\Pc_{it}^{h_2} &=&  e^{it \Lc_{X_{h_2}}} \Pc_{0} =  e^{it \Lc_{X_{h_2}}} \langle X_w\rangle_\C = 
\langle X_{e^{-it h \frac{\partial}{\partial \theta}}w}    \rangle_\C = \\
&=&    \langle X_{w^{(it)}}    \rangle_\C \, ,
\eas
 where
\begin{equation}\label{www}
\frac { w^{(it)}}{\sqrt{2}} \, =    \sqrt{h} \,  e^{t h}e^{i \theta} =  e^{\frac 12 \log(h) + th + i \theta}
=  e^{\frac {dg}{dh}+ i \theta},
\end{equation}
and $g(h) = \frac 12 h\log(h) - \frac h2 + t \frac {h^2}{2}$ is the strongly convex toric symplectic potential.
(See e.g. \cite{BFMN}).
A local  $J_0$--holomorphic coordinate satisfying (\ref{ee-coor}) is
$u =  \log(w/\sqrt{2}) = t h +  \frac 12 \log(h) + i \theta$ with $\beta(t) = 1/t$. 
To find the Hilbert space it will be convenient to use the $S^1$-invariant trivializing
section (we will return to the trivialization defined  by $\sigma(q,p)=1$ when we calculate the pairing of states of different polarizations)
$\tilde \sigma = e^{- \frac{i}{2\hbar}qp} \, \sigma$ so that 
$$
\nabla \, \tilde \sigma = - \frac i\hbar \, h d \theta \, \tilde \sigma \, . $$
The equation for $\Pc_{it}^{h_2}$-polarized sections then reads
\be
\label{ee-pola}
\nabla_{\frac{\partial}{\partial \bar u}} \, \psi = 0 \Leftrightarrow  \left(\frac {\partial}{\partial v} +
 i \frac {\partial}{\partial \theta}  + \frac h\hbar \right) \tilde \psi(q, p) = 0 .
\ee
Since $g$ is strictly convex and $v= \frac{dg}{dh}$, the inverse Legendre map is
$h = \frac{dk}{dv}$, where $k(v) = h(v)v - g(h(v))$ is the K\"ahler potential. 
We see that the solutions of (\ref{ee-pola}) are given by
\be
\nonumber
\tilde \psi (q, p) = f(w^{(it)}) \, e^{- \frac k\hbar}  =  f(w^{(it)}) \, e^{- \frac 1{2\hbar} (th^2+h)}, 
\ee
where $f$ is an arbitrary $\Pc_{it}^{h_2}$-holomorphic function. 
Therefore, we have, in the trivialization defined by $\tilde \sigma$,
\be
\label{ee-hit2}
\tilde \Hc_{\Pc_{it}^{h_2}} = \left\{ f(w^{(it)})  \, e^{- \frac 1{2\hbar} (th^2+h)} \,   \otimes \sqrt{dw^{(it)}} \, :  \, 
\int_{\R^+ \times S^1} \, | f(w^{(it)})|^2 \, e^{- \frac 1\hbar  (th^2+h)} \, \sqrt{g''(h)} \, dhd\theta   < \infty  \right\} \, ,
\ee

Let us now obtain $U_2^{(it)}$  in   (\ref{ee-uit2}). From (\ref{pQ}) we obtain
\be
\label{ee-h2p}
\hat h^{pQ} = - i \hbar  \left( \frac{\partial}{\partial \theta} \otimes 1 + \,  1 \otimes \Lc_{\frac{\partial}{\partial \theta}}\right) \, 
\hbox{and} \quad 
\hat h_2^{pQ} = - i \hbar \left(  h \, \frac{\partial}{\partial \theta} \otimes 1 +   1 \otimes \Lc_{h \frac{\partial}{\partial \theta}}\right) - \frac {h^2}2 \,  \otimes 1  . 
\ee
The sections  in the monomial basis of $\tilde \Hc_{\Pc_{it}^{h_2}}$,
\ba \label{ee-phit}
\varphi_m^{(it)} &=& {a_m} (w^{(it)})^m \,  \, e^{- \frac 1{2\hbar} (th^2+h)} \,   
\otimes \sqrt{dw^{(it)}} = \\
\nonumber  &=& {a_m 2^{\frac{m}{2}+\frac14}}h^{\frac{m}2+\frac14}\, e^{- \frac h{2\hbar}} \, e^{- \frac t{2\hbar} ( (h- \hbar (m + \frac 12))^2)} e^{\frac{t\hbar}{2}(m+ \frac 12)^2} \, e^{i(m+1/2) \theta} \, \sqrt{du^{(it)}} ,  \quad  m \in \N_0, 
\ea
where ${a_m = 2^{-\frac{m}{2}-\frac14} (2\pi\hbar)^{-\frac12} (\hbar (m+\frac12))^{-\frac{m}{2}-\frac14}e^{\frac{m}{2}+\frac12}}$, 
form an orthogonal basis of  eigensections of $\hat h^{pQ}$ and of $\hat h_2^\mu$ (see (\ref{ee-defh2})), with  
\be
 \label{ee-h22q}
\hat h_2^{\mu}(\varphi_m^{(it)}) = \frac 12 \left(\hbar (m + \frac 12)\right)^2 \, \varphi_m^{(it)} .
\ee
The constants $a_m$ in (\ref{ee-phit}) are chosen to have $ \varphi_m^{(0)}=\tilde \psi_{\Lc_m}$ (see (\ref{ee-phl})).
In this case $\hat h_2^{\mu}$ acts on the space of polarized sections 
because $h$ preserves the polarizations $\Pc^{h_2}_{it}$ for every $t \in [0, \infty)$.
Eventhough $h_2$ itself does not preserve the polarization the operator $\hat h_2^\mu$ is defined
through $\tilde h$ in (\ref{ee-defh2}) and therefore has a well defined action 
on the Hilbert spaces $\Hc_{\Pc^{h_2}_{it}}$.  
Note that (\ref{ee-phit}) is a local expression of a global $\Pc^{h_2}_{it}$--holomorphic  section of the half-form corrected prequantum bundle in spite of the 
factor of $e^{\frac{i}{2}\theta}$. In fact, as explained in Section 3 and in the Appendix of \cite{KMN1}, this factor gets 
canceled against a similar factor arising from the fact that $du^{(it)}$ is a meromorphic section of the canonical bundle having 
a pole of order 1 at the origin. 
(Or, equivalently, (\ref{ee-phit}) is written in a frame of the half-form corrected prequantum bundle 
which has a square root ramification divisor at the origin.)

To the Lagrangian cycles
$$
\Lc_m = \{(q,p)\in \R^2: q^2+p^2 = \hbar (2m+1)\}, \, m\in \N_0,
$$
 as in (\ref{ee-phl}), there will then correspond the state $\tilde \psi_{\Lc_m}= \varphi_m^{(0)}$, as shown below.

{}From (\ref{ee-h2p}) and (\ref{ee-phit}) we also obtain that $e^{\frac t\hbar \, \hat h_2^{pQ}} (\varphi_m^{(0)}) = \varphi_m^{(it)}$  and therefore
\be\label{t2monomialstate}
U_2^{it} (\varphi_m^{(0)}) =  
 e^{-\frac t\hbar \, \hat h_2^{\mu}} \circ e^{\frac t\hbar \, \hat h_2^{pQ}}   (\varphi_m^{(0)}) = 
$$
$$
= {a_m 2^{\frac{m}{2}+\frac14}}h^{\frac{m}2+\frac14} 
e^{- \frac h{2\hbar}} \, e^{- \frac t{2\hbar} ( (h- \hbar (m + \frac 12))^2) } \, e^{i(m+1/2) \theta} \, \sqrt{\left( 1/{2h}+t\right)dh + i d\theta} . 
\ee
\end{itemize}

When taking the limit $t\to +\infty$, following \cite{KMN1} and taking care of the fact that $\sqrt{du^{(it)}}$ effectively carries 
a factor of $e^{-\frac{i}{2}\theta}$ as remarked above, we obtain
\be
\label{ee-extr}
\hat \psi_{\Lc_m}= \lim_{t \to \infty} U^{it}_2 (\varphi_m^{(0)}) = \delta(h-\hbar (m+\frac12)) \, e^{im\theta} \, \sqrt{dh},\,\, m\in \N_0  
\ee
so that indeed, $\tilde  \psi_{\Lc_m} = \varphi_m^{(0)}$.

\item[2)] \emph{BKS pairing between the Schr\"odinger representation and the energy representation:} 

\begin{proposition}
For the holomorphic forms in (\ref{t1state}) with $t=t_1$ and (\ref{ee-phit}) with $t=t_2$, we obtain
\begin{equation}\label{formpair}
\left(\frac{i}{2}\right) dz^{(it_1)}\wedge \left(\left(\frac{1}{2h}+t_2\right)dh-id\theta\right) = \left(\frac{i}{2}\right)
\left(\frac{(p-iq)(1+t_1)}{p^2+q^2}+t_2(p-it_1q)\right) \omega.
\end{equation}
\end{proposition}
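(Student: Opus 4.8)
The plan is to trivialize the computation by writing both holomorphic one-forms in the global Cartesian coordinates $(q,p)$ on $M=\R^2$, in which $\omega = dq\wedge dp$, and then expanding the wedge product of two explicit one-forms on the plane.

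First I would record the three relevant differentials. From $z^{(it_1)} = q + it_1 p$ one has $dz^{(it_1)} = dq + it_1\, dp$; from $h = \tfrac12(p^2+q^2)$ one has $dh = q\,dq + p\,dp$; and from $\theta = \arctan(p/q)$ one has $d\theta = \dfrac{q\,dp - p\,dq}{p^2+q^2} = \dfrac{q\,dp - p\,dq}{2h}$. (The second factor in the statement appears with $-i\,d\theta$ rather than $+i\,d\theta$ because in the BKS pairing one takes the complex conjugate of the half-form $\sqrt{du^{(it_2)}}$ from (\ref{ee-phit}), and $h,\theta$ are real; this is the only sign one has to track.) Collecting $dq$- and $dp$-components then gives
\[
\Bigl(\tfrac{1}{2h}+t_2\Bigr)dh - i\,d\theta = \Bigl(\bigl(\tfrac{1}{2h}+t_2\bigr)q + \tfrac{ip}{2h}\Bigr)dq + \Bigl(\bigl(\tfrac{1}{2h}+t_2\bigr)p - \tfrac{iq}{2h}\Bigr)dp .
\]

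Next I would wedge this against $dq + it_1\, dp$, using $dq\wedge dq = dp\wedge dp = 0$ and $dq\wedge dp = \omega$. Only the two cross terms survive, and the coefficient of $\omega$ comes out to be
\[
\Bigl(\tfrac{1}{2h}+t_2\Bigr)(p - it_1 q) + \tfrac{1}{2h}\bigl(t_1 p - iq\bigr) .
\]
The one genuinely non-mechanical step is then the algebraic identity $(p - it_1 q) + (t_1 p - iq) = (1+t_1)(p - iq)$, which lets one merge the two $\tfrac{1}{2h}$-contributions into $\tfrac{1}{2h}(1+t_1)(p-iq) = \dfrac{(1+t_1)(p-iq)}{p^2+q^2}$, leaving the residual term $t_2(p - it_1 q)$. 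Multiplying through by $\tfrac{i}{2}$ produces exactly the right-hand side of (\ref{formpair}). I do not expect any real obstacle here: the statement is a bookkeeping identity about two-forms on $\R^2$, and the only points requiring care are the correct formula for $d\theta$ and the conjugation convention in the pairing that fixes the sign in front of $d\theta$.
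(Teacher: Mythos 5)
Your computation is correct and is exactly the paper's (one-line) argument made explicit: expand $dz^{(it_1)}=dq+it_1dp$, $dh=q\,dq+p\,dp$, $d\theta=(q\,dp-p\,dq)/(p^2+q^2)$ and wedge, the coefficient of $\omega$ reducing via $(p-it_1q)+(t_1p-iq)=(1+t_1)(p-iq)$ to the stated right-hand side. Your side remark that the $-i\,d\theta$ comes from conjugating $du^{(it_2)}$ in the BKS pairing is also consistent with how the identity is used in (\ref{harmonicpairing}).
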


\begin{proof}The result follows directly from $z^{(it_1)}= q+it_1 p$ and (\ref{www}) with $t=t_2$.
\end{proof}

\begin{proposition}The pairing in (\ref{ee-rbks2}) for the harmonic oscillator is given by
\begin{eqnarray}\label{harmonicpairing}\nonumber
&\langle \psi, \hat \psi_{\Lc_m}\rangle_{BKS} =
\lim_{(t_1,t_2)\to (0,+\infty)}\langle U_1^{it} (\psi)(q,p) , e^{ipq/2\hbar}U_2^{(it)} (\varphi_m^{(0)})\rangle_{BKS} =
\\
&=\sqrt{\frac{i}{2}} \int_{\R^2}  \psi(q) e^{-ipq/2\hbar}e^{-im \theta} \sqrt{p} \, 
\delta\left(h-\hbar(m+\frac12)\right)dqdp,
\end{eqnarray}
where $\psi\in \Hc_{Sch}$.
\end{proposition}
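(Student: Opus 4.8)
The plan is to unwind the definition (\ref{ee-rbks})--(\ref{ee-rbks2}) of the regularized pairing and compute it directly, taking the limit $t_1\to 0$ first and then $t_2\to+\infty$ (the joint limit gives the same value by the estimates indicated below). By definition, $\langle\psi,\hat\psi_{\Lc_m}\rangle_{BKS}$ is the limit of the half-form corrected BKS pairings of $U_1^{it_1}(\psi)$, written out in (\ref{t1state}), and of $U_2^{it_2}(\varphi_m^{(0)})$, written out in (\ref{t2monomialstate}), both re-expressed in the standard Schr\"odinger trivialization $\sigma=1$; the factor $e^{ipq/2\hbar}$ in the statement is exactly the change of frame from the $S^1$-invariant trivialization $\tilde\sigma=e^{-ipq/2\hbar}\sigma$ used in (\ref{t2monomialstate}) back to $\sigma$. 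Using (\ref{innerproduct}) and the standard half-form pairing formula of geometric quantization, the regularized pairing thus equals $\frac{1}{2\pi\hbar}\int_{\R^2}(\text{coeff.\ of }U_1^{it_1}\psi)\,\overline{(\text{coeff.\ of }e^{ipq/2\hbar}U_2^{it_2}\varphi_m^{(0)})}\,\rho_{t_1,t_2}\,dq\,dp$, where $\rho_{t_1,t_2}$ is the half-form pairing density attached to the holomorphic $1$-forms $dz^{(it_1)}$ and $du^{(it_2)}$; by the preceding Proposition $\rho_{t_1,t_2}^2=\frac{i}{2}\bigl(\frac{(p-iq)(1+t_1)}{p^2+q^2}+t_2(p-it_1q)\bigr)$.

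In the limit $t_1\to 0$, the Gaussian $e^{-\frac{t_1}{2\hbar}(p+p_0)^2}$ in (\ref{t1state}) tends to $1$, so the $p_0$-integral reassembles $\psi(q)=\frac{1}{\sqrt{2\pi}}\int_\R e^{ip_0q/\hbar}\tilde\psi(p_0)\,dp_0$, while $\rho_{t_1,t_2}^2\to\frac{i}{2}\bigl(\frac{p-iq}{p^2+q^2}+t_2p\bigr)$. Interchanging this limit with the $(q,p,p_0)$-integration is justified by dominated convergence on a dense subspace (say $\psi$ Schwartz): $|e^{-\frac{t_1}{2\hbar}(p+p_0)^2}|\le 1$, $\tilde\psi$ is integrable, the coefficient of $U_2^{it_2}(\varphi_m^{(0)})$ decays like a Gaussian in the action variable $h$ for fixed $t_2$, and $\rho_{t_1,t_2}$ is uniformly $L^1_{\mathrm{loc}}$ (its only singularity, of order $|(q,p)|^{-1/2}$ at the origin, is plane-integrable).

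The decisive step is $t_2\to+\infty$, where I would reuse the mechanism that already produces (\ref{ee-extr}) from (\ref{ee-phit}). The coefficient of $U_2^{it_2}(\varphi_m^{(0)})$ carries the factor $e^{-\frac{t_2}{2\hbar}(h-\hbar(m+\frac12))^2}$, which alone tends to $0$; the compensating weight $\sqrt{t_2}$ is supplied \emph{precisely} by the half-form density, since $\rho_{0,t_2}\sim\sqrt{t_2}\,\sqrt{\tfrac{i}{2}p}$ as $t_2\to+\infty$ (the $t_2$-independent term being subleading). Hence $\rho_{0,t_2}\,e^{-\frac{t_2}{2\hbar}(h-\hbar(m+\frac12))^2}\to\sqrt{2\pi\hbar}\,\sqrt{\tfrac{i}{2}p}\,\delta\!\bigl(h-\hbar(m+\tfrac12)\bigr)$, which is what makes the pairing finite and localizes it on $\Lc_m$. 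One then evaluates the remaining smooth factors of $U_2^{it_2}(\varphi_m^{(0)})$ at $h=\hbar(m+\frac12)$: the $e^{i(m+\frac12)\theta}$ it carries, after conjugation and combined with the $e^{i\theta/2}$ arising from the square-root ramification of $\sqrt{du^{(it_2)}}$ at the origin — the same cancellation used to pass from (\ref{ee-phit}) to (\ref{ee-extr}) — produces $e^{-im\theta}$, while $\overline{e^{ipq/2\hbar}}$ produces $e^{-ipq/2\hbar}$. Finally, the constant $a_m$ of (\ref{ee-phit}) was fixed exactly so that (\ref{ee-extr}) holds; rather than re-chasing $a_m$ by hand I would use this: since $\tfrac{1}{\sqrt{t_2}}\sqrt{du^{(it_2)}}\to\sqrt{dh}$ and $\sqrt{t_2}\,e^{-\frac{t_2}{2\hbar}(\cdot)^2}\to\sqrt{2\pi\hbar}\,\delta$, the constant multiplying $\delta(h-\hbar(m+\tfrac12))$ in $\sqrt{t_2}\,U_2^{it_2}(\varphi_m^{(0)})$ is by construction the coefficient $1$ of (\ref{ee-extr}), so the $a_m$-, $2$-, $\hbar$- and $\pi$-factors collapse and, together with the $\frac{1}{2\pi\hbar}$ and the $\tfrac i2$ normalization of the half-form pairing, leave exactly $\sqrt{i/2}\int_{\R^2}\psi(q)\,e^{-ipq/2\hbar}\,e^{-im\theta}\,\sqrt{p}\,\delta(h-\hbar(m+\tfrac12))\,dq\,dp$.

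I expect the main obstacle to be twofold. First, the careful bookkeeping of multiplicative constants — the normalization $a_m$, the factors $2^{m/2+1/4}$ and $\hbar^{m/2+1/4}$, $e^{-h/2\hbar}$ restricted to $\Lc_m$, the Gaussian weight $\sqrt{2\pi\hbar}$, the prefactor $\frac{1}{2\pi\hbar}$ of (\ref{innerproduct}) and the half-form normalization — must be shown to telescope to exactly $\sqrt{i/2}$; funneling everything through the already-established limit (\ref{ee-extr}) is what keeps this from becoming error-prone. Second, one must rigorously justify interchanging the (iterated, or joint) limit $(t_1,t_2)\to(0,+\infty)$ with the plane integral, for $\psi$ in a suitable dense subspace, by exhibiting dominating functions uniform in $t_1$ and $t_2$. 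I would also flag that the branch of $\sqrt{\tfrac i2 p}$ near the caustic points $p=0$ of $\pi(\Lc_m)$ is left explicit here; it is precisely this branch that will carry the Maslov phase in the ensuing computation of $\psi_{\Lc_m}=B(\hat\psi_{\Lc_m})$.
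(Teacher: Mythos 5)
Your proposal is correct and follows essentially the same route as the paper: write the regularized pairing as the triple integral built from (\ref{t1state}), (\ref{t2monomialstate}) and the half-form density of (\ref{formpair}), justify exchanging the limit with the integral via the Gaussian factors, and let the $t_2$-growth of the half-form density turn $e^{-\frac{t_2}{2\hbar}(h-\hbar(m+\frac12))^2}$ into $\sqrt{2\pi\hbar}\,\delta\bigl(h-\hbar(m+\tfrac12)\bigr)$ with the same $e^{i\theta/2}$ ramification cancellation producing $e^{-im\theta}$. The only deviations are cosmetic: you take $t_1\to 0$ before $t_2\to+\infty$ (the paper does the reverse, with the same dominated-convergence justification) and you fix the multiplicative constants by appealing to the normalization already encoded in (\ref{ee-extr}) rather than substituting $a_m$ explicitly, which is a legitimate and arguably more robust piece of bookkeeping.
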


\begin{proof}
From (\ref{t1state}), (\ref{t2monomialstate}) and (\ref{formpair}) we obtain the BKS pairing
$$
\langle U_1^{it} (\psi)(q,p) , e^{ipq/2\hbar}U_2^{(it)} (\varphi_m^{(0)})\rangle_{BKS} =
$$
$$
= \frac{1}{\sqrt{2\pi}}\sqrt{\frac{i}{2}} a_m 2^{\frac{m}{2}+\frac14}\int_{\R^2} \int_\R  
e^{ \frac i\hbar \, p_0q} e^{- \frac{t_1}{2{\hbar}} (p{+}p_0)^2} \tilde \psi(p_0) h^{\frac{m}2+\frac14} e^{-ipq/2\hbar}
e^{- \frac h{2\hbar}} e^{- \frac{t_2}{2\hbar} ( (h- \hbar (m + \frac 12))^2) } e^{-im \theta} \cdot 
$$
$$
\cdot \left(\frac{(p-iq)(1+t_1)}{p^2+q^2}+t_2(p-it_1q)\right)^\frac12  dp_0 dqdp.
$$

Due to the gaussians, the integrals are convergent and bounded and, therefore,
the limit $(t_1,t_2)\to (0,+\infty)$ exists and can be taken inside the integral.

Taking the limit $t_2\to +\infty$ gives,
$$
\lim_{t_2\to +\infty} \langle U_1^{it} (\psi)(q,p) , e^{ipq/2\hbar}U_2^{(it)} (\varphi_m^{(0)})\rangle_{BKS} =
\sqrt{\hbar} a_m 2^{\frac{m}{2}+\frac14}\sqrt{\frac{i}{2}} \left(\hbar (m+\frac12)\right)^{\frac{m}{2}+\frac14} e^{-(\frac{m}{2}+\frac14)}\cdot
$$
$$
\cdot \int_{\R^2} \int_\R  
e^{ \frac i\hbar \, p_0q} e^{- \frac{t_1}{2{\hbar}} (p{+}p_0)^2} \tilde \psi(p_0) e^{-ipq/2\hbar}e^{-im \theta} 
(p-it_1q)^{\frac12} \delta\left(h-\hbar(m+\frac12)\right)dp_0 dqdp.
$$
The limit $t_1\to 0$ and the integration in $dp_0$ then produces the result.
\end{proof}

\item[3)] \emph{The semiclassical state $\psi_\Lc$:}

From above we obtain the following
\begin{proposition}
Let $\Lc_m, m\in \N_0$, be the Lagrangian cycles where $h=\hbar (m+\frac12)$, as above. 
The pairing map in (\ref{ee-pm}) and (\ref{ee-rscs}) for the harmonic oscillator reads
\begin{equation}
B:\Hc_{\Pc_\mu}\to \Hc_{Sch}
\end{equation}
\begin{equation} \nonumber
\psi_{\Lc_m}(q) = B (\hat \psi_{\Lc_m})(q) = \psi^+_{\Lc_m}(q) + \psi^-_{\Lc_m}(q),
\end{equation}
such that $\psi^+_{\Lc_m}, \psi^-_{\Lc_m}$ have support in $[-\hbar(2m+1),\hbar(2m+1)]$ where they are given by 
\begin{equation} \nonumber
\psi^+_{\Lc_m}(q) = \sqrt{\frac{i}{2}} (\hbar(2m+1)-q^2)^{-\frac14} e^{\frac{-i}{2\hbar}q\sqrt{\hbar (2m+1)-q^2}}e^{im\arctan {\frac{\sqrt{\hbar (2m+1)-q^2}}{q}}}\otimes \sqrt{dq}
\end{equation}
and
\begin{equation}\label{wkbwave}
\psi^-_{\Lc_m}(q) = {\sqrt{\frac{i}{2}}}e^{i\frac{\pi}{2}}(\hbar(2m+1)-q^2)^{-\frac14}e^{\frac{i}{2\hbar}q\sqrt{\hbar (2m+1)-q^2}}e^{-im\arctan {\frac{\sqrt{\hbar (2m+1)-q^2}}{q}}}\otimes \sqrt{dq}.
\end{equation}
\end{proposition}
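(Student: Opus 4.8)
The plan is to read off $\psi_{\Lc_m}=B(\hat\psi_{\Lc_m})$ directly from the pairing formula (\ref{harmonicpairing}), which has already been established. By Step 3 of the construction in Section \ref{s4.1}, $\psi_{\Lc_m}$ is the unique element of $\Hc_{Sch}$ satisfying $\langle\psi,\psi_{\Lc_m}\rangle_{\Hc_{Sch}}=\langle\psi,\hat\psi_{\Lc_m}\rangle_{BKS}$ for every $\psi\in\Hc_{Sch}$; hence it suffices to carry out the $p$-integration in (\ref{harmonicpairing}) and to identify the resulting distributional kernel in $q$, together with the half-form $\sqrt{dq}$ understood on the Schr\"odinger side, with $\psi_{\Lc_m}(q)$.

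First I would localize the $p$-integral using $\delta(h-\hbar(m+\frac12))$ with $h=\frac12(p^2+q^2)$. Viewed as a distribution in $p$ at fixed $q$, it has the two simple zeros $p=\pm\sqrt{\hbar(2m+1)-q^2}$ exactly when $|q|\le\sqrt{\hbar(2m+1)}$ and none for larger $|q|$, which accounts for the support statement. Each zero contributes a Jacobian $1/|p|=(\hbar(2m+1)-q^2)^{-1/2}$, and the two zeros produce the two summands, the contribution from $p>0$ being $\psi^+_{\Lc_m}$ and that from $p<0$ being $\psi^-_{\Lc_m}$.

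Next I would substitute, on each of the two arcs of the Bohr--Sommerfeld circle $\Lc_m$, the remaining factors in (\ref{harmonicpairing}): $e^{-ipq/2\hbar}$ becomes $e^{\mp\frac{i}{2\hbar}q\sqrt{\hbar(2m+1)-q^2}}$; the angular factor $e^{-im\theta}$, with $\theta=\arctan(p/q)$ the polar angle, becomes $e^{\mp im\arctan(\sqrt{\hbar(2m+1)-q^2}/q)}$ with the branch of $\arctan$ chosen so that it parametrizes the arc in question; and, crucially, the half-form factor $\sqrt{p}$ equals $(\hbar(2m+1)-q^2)^{1/4}$ on the branch $p=+\sqrt{\hbar(2m+1)-q^2}$ but, for the principal branch of the square root, equals $e^{i\pi/2}(\hbar(2m+1)-q^2)^{1/4}$ on the branch $p=-\sqrt{\hbar(2m+1)-q^2}$. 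Combining $\sqrt{p}$ with the Jacobian gives $(\hbar(2m+1)-q^2)^{-1/4}$ on the first arc and $e^{i\pi/2}(\hbar(2m+1)-q^2)^{-1/4}$ on the second, and retaining the overall prefactor $\sqrt{i/2}$ from (\ref{harmonicpairing}) then reproduces exactly $\psi^+_{\Lc_m}(q)$ and $\psi^-_{\Lc_m}(q)$ as stated; the factor $e^{i\pi/2}$ in (\ref{wkbwave}) is precisely the phase acquired by $\sqrt{p}$ on passing from the $p>0$ arc to the $p<0$ arc.

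The only delicate point, and the one carrying the physical content, is the coherent choice of branches across the turning points $p=0$, which are the caustics of the projection $\pi(\Lc_m)$: one must follow both the half-form $\sqrt{p}$ transported through the pairing and the angle $\theta=\arctan(p/q)$ consistently along the two arcs $p>0$ and $p<0$ of $\Lc_m$. It is exactly the jump of the square-root branch of $\sqrt{p}$ between the two arcs that yields the relative $e^{i\pi/2}$, i.e.\ the expected Maslov phase at a caustic; once this is fixed, the rest is a direct substitution into (\ref{harmonicpairing}) followed by the $dp$-integration against the delta function.
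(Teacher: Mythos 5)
Your proposal is correct and follows essentially the same route as the paper's own proof: both read off the kernel of $B(\hat\psi_{\Lc_m})$ from (\ref{harmonicpairing}), perform the $p$-integration against $\delta\bigl(h-\hbar(m+\tfrac12)\bigr)$ over the two arcs $p\gtrless 0$, and obtain the relative phase $e^{i\pi/2}$ from the branch of $\sqrt{p}$ on the $p<0$ arc, with the Jacobian $1/|p|$ combining with $\sqrt{|p|}$ to give $(\hbar(2m+1)-q^2)^{-1/4}$. The only wrinkle is the sign of the angular phase $e^{\mp im\arctan(\cdot)}$, where your direct substitution of $e^{-im\theta}$ differs from the stated $\psi^{\pm}_{\Lc_m}$; this is a conjugation-convention issue already present between (\ref{harmonicpairing}) and the paper's own proof, not a gap in your argument.
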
 

\begin{proof}Let $\psi$ in (\ref{harmonicpairing}) be a continuous function in $L^2(\R)$. Then, since the inner product 
in $\Hc_{Sch}$ is given by integration in $q$, we obtain from (\ref{ee-pm}), (\ref{ee-rscs}) and (\ref{harmonicpairing}), 
\begin{equation}
\psi_{\Lc_m}(q) = B (\hat \psi_{\Lc_m})(q) = \psi^+_{\Lc_m}(q) + \psi^-_{\Lc_m}(q),
\end{equation}
where
\begin{equation}
\psi^+_{\Lc_m}(q) = {\sqrt{\frac{i}{2}}}\int_0^{+\sqrt{2\hbar(m+\frac12)}} e^{-i\frac{pq}{2\hbar}}e^{im\arctan \frac{p}{q}} \sqrt{p}\,\delta\left(\frac{p^2+q^2}{2}-\hbar (m+\frac12)\right) dp \otimes \sqrt{dq}
\end{equation}
and
\begin{equation}
\psi^-_{\Lc_m}(q) ={\sqrt{\frac{i}{2}}}e^{i\frac{\pi}{2}} \int_{-\sqrt{2\hbar (m+\frac12)}}^0 e^{-i\frac{pq}{2\hbar}}e^{im\arctan \frac{p}{q}} \sqrt{{-}p}\,\delta\left(\frac{p^2+q^2}{2}-\hbar (m+\frac12)\right) dp \otimes \sqrt{dq}
\end{equation}
and the result follows.
\end{proof}

We observe that $\psi_{\Lc_m}$, which is supported in the ``classically allowed region'' $[-\hbar(2m+1),\hbar(2m+1)]$, contains two contributions, weighted with a relative (Maslov) phase $e^{i\frac{\pi}{2}}$ which arises at the caustic points $q=\pm \sqrt{2\hbar (m+\frac12)}, p=0$, for the projection of $\Lc_m$ onto the $q$-axis. Moreover, by explicitly evaluating 
$\int pdq$ with $(q,p)\in \Lc_m$, we see that $\psi_{\Lc_m}$ (\ref{wkbwave}) has the form of the usual WKB wave function  in the classically allowed region. (See, for example, Chapter 7 of \cite{Me}.)

\end{itemize}

\large{\bf{Acknowledgements:}} The authors were partially supported by FCT/Portugal through the projects PEst-OE/EEI/LA009/2013, EXCL/MAT-GEO/0222/2012, \- PTDC/MAT/\-119689/2010, PTDC/MAT/1177762/2010. The first author was supported by the FCT fellowship
SFRH/BPD/77123/2011 and by a postdoctoral fellowship of the project
 PTDC/MAT/120411/2010. The second author thanks S. Wu for useful discussions and is thankful for generous support from the Emerging Field Project on Quantum Geometry from Erlangen--N\"urnberg University.

\providecommand{\bysame}{\leavevmode\hbox to3em{\hrulefill}\thinspace}

\end{document}